\documentclass{ieeetj}
\usepackage{cite}
\usepackage{amsmath,amssymb,amsfonts}
\usepackage[noend]{algorithmic}
\usepackage{graphicx,color}
\usepackage{textcomp}
\usepackage{xcolor}
\usepackage{hyperref}
\hypersetup{hidelinks=true}
\usepackage{algorithm, algorithmic}
\usepackage{todonotes}

\usepackage{tikz}

\makeatletter
\newcommand{\xleftrightarrow}[2][]{\ext@arrow 3359\leftrightarrowfill@{#1}{#2}}
\makeatother

\newcommand{\xdasharrow}[2][->]{
\tikz[baseline=-\the\dimexpr\fontdimen22\textfont2\relax]{
\node[anchor=south,font=\scriptsize, inner ysep=1.5pt,outer xsep=2.2pt](x){#2};
\draw[shorten <=3.4pt,shorten >=3.4pt,dashed,#1](x.south west)--(x.south east);
}
}

\def\BibTeX{{\rm B\kern-.05em{\sc i\kern-.025em b}\kern-.08em
    T\kern-.1667em\lower.7ex\hbox{E}\kern-.125emX}}
\AtBeginDocument{\definecolor{tmlcncolor}{cmyk}{0.93,0.59,0.15,0.02}\definecolor{NavyBlue}{RGB}{0,86,125}}

\usepackage{subfig}
\newtheorem{definition}{Definition}
\newtheorem{lemma}{Lemma}
\newtheorem{theorem}{Theorem}

\def\OJlogo{\vspace{-4pt}$<$Society logo(s) and publication title will appear here.$>$}
\def\seclogo{\vspace{10pt}$<$Society logo(s) and publication title will appear here.$>$}

\def\authorrefmark#1{\ensuremath{^{\textbf{#1}}}}

\begin{document}
\receiveddate{XX Month, XXXX}
\reviseddate{XX Month, XXXX}
\accepteddate{XX Month, XXXX}
\publisheddate{XX Month, XXXX}
\currentdate{XX Month, XXXX}
\doiinfo{XXXX.2022.1234567}

\markboth{}{Author {et al.}}

\title{Pattern-level Differential Privacy for High-utility Complex Event Processing}

\author{He Gu\authorrefmark{1}, Thomas Plagemann\authorrefmark{1}, Vera Goebel\authorrefmark{1}, Maik Benndorf \authorrefmark{1},\\ and Boris Koldehofe\authorrefmark{2}}

\affil{Department of Informatics, University of Oslo, Oslo, 0316 Norway}
\affil{Technische Universität Ilmenau, Ilmenau, Germany}
\corresp{Corresponding author: He Gu (emailco: heg@ifi.uio.no).}
\authornote{This work was funded by the Parrot Project (Research Council of Norway, project number 311197).}

\begin{abstract}
Current privacy-preserving mechanisms (PPMs) in Complex Event Processing (CEP) systems are unnecessarily restrictive, reducing the utility of data received by data consumers. This article presents a novel approach to preserve privacy in CEP systems, improving the utility of detected event patterns by dynamically adapting the noise added to an unprotected data stream. We introduce a new guarantee named pattern-level differential privacy (DP), which enables us to apply and compare the strength of PPMs at the pattern level. We propose new pattern-level PPMs yielding pattern-level DP and analyze different trust settings of these PPMs and their requirements for context knowledge in the CEP system, e.g., the deployed queries. Our evaluation is based on three datasets (two real-world, one synthetic) and shows that the proposed PPMs increase data utility while preserving the same privacy level as the state-of-the-art PPMs. We use simulations to study the performance of our proposed PPMs in various practical scenarios. Furthermore, we demonstrate that computational complexity is not an obstacle to deployment.
\end{abstract}

\begin{IEEEkeywords}
differential privacy, complex event processing, data stream processing
\end{IEEEkeywords}


\maketitle

\section{INTRODUCTION}
\IEEEPARstart{C}{omplex} Event Processing (CEP) systems are a very prominent big data processing paradigm that helps in detecting patterns, called complex events, in conceptually infinite data streams. The patterns of interest are typically specified in queries with syntax and operators similar to SQL, and specialized operators for complex events, such as the sequence operator. These queries utilize windowing techniques to partition conceptually infinite data streams into finite windows. Incoming data tuples or events are processed immediately, facilitating instant decision-making. CEP systems are applied in various fields, including maritime monitoring, smart building operations~\cite{kumar2024real}, credit card fraud detection, cybersecurity~\cite{alaghbari2022complex}, and the Internet of Things (IoT) \cite{kumar2024rule}. The demand for real-time data stream analysis is increasing due to the growing number of data sources and the valuable and timely insights they can provide \cite{grez2021formal}. Consequently, “virtually all cloud vendors offer first-class support for deploying managed stream processing pipelines” \cite{fragkoulis2024survey}. The data streams processed by CEP often contain sensitive information related to individuals, thereby posing potential privacy risks. Therefore, it is crucial to develop privacy-preserving mechanisms (PPMs) that achieve two seemingly conflicting objectives: ensuring robust privacy protection and maintaining sufficient data utility for applications in use.

Traditional PPMs designed for static datasets cannot be directly used for CEP systems. Multiple studies have proposed PPMs for data streams, regarding all data tuples in a data stream equally important for privacy protection~\cite{fichtenberger2021differentially, dwork2010differential1, chen2017pegasus}. However, not all events for CEP systems are of the same importance for both privacy protection and analysis, as certain events can be part of a complex event that reveals more critical information. We argue that particular privacy protections of these key events can lead to better performance of a privacy-aware CEP system in terms of both privacy protection and data utility compared to state-of-the-art PPMs.

Consider the example of an application that serves taxi drivers and passengers. A continuous data stream of taxi GPS locations is analyzed by a CEP query to find nearby passengers and warn potential traffic jams. The passengers on taxis prefer not to reveal their paths, as they may travel to sensitive locations, e.g., hospitals. In such cases, a common strategy for existing PPMs is to attach sufficient noise to the entire data stream during a trip to conceal the presence at sensitive locations. However, passengers only want to hide their proximity to sensitive locations, i.e., the pattern consisting of a sequence of GPS locations outside and inside sensitive areas within a temporal constrained window. Protecting such patterns instead of all events in a data stream improves data utility by reducing the noise added to the data stream while maintaining equally strong privacy protection. We denote PPMs that protect only particular patterns rather than all events in a data stream as \textbf{pattern-level} PPMs \cite{gu2023differential}. 

In addition to the patterns that contain data producers private information, pattern-level PPMs can also be enhanced by utilizing the patterns of data consumers interests. When these patterns are known, PPMs can be tailored accordingly by adding less noise to the relevant events, thereby improving data utility. We call the analysis conducted by data consumers in such cases \textbf{transparent data stream analysis}, since their analysis based on patterns is transparent to PPMs. In the context of the taxi example, if queries to traffic jam alerts are public, we can identify the events that are more critical for alerts and add less noise to these key events, leading to a higher precision of alerts. However, the patterns of data consumers interests can be confidential. For example, a data consumer that maintains a video-hosting platform might not want to reveal the patterns used by the recommender system of the platform. We call this type of analysis \textbf{opaque data stream analysis}.

Currently, only a few pattern-level PPMs are proposed~\cite{palanisamy2020towards, delouee2023app}. Although they deliver better performance than non-pattern-level PPMs for CEP systems, they are not yet applicable to support most CEP functionalities and limit the performance for privacy-aware CEP systems. To address the shortcomings in existing solutions, we investigate the following research questions in this article:
\begin{itemize}
    \item How to propose a novel privacy guarantee that is valid for multidimensional data and common CEP operators.
    \item How to design PPMs that, for a guaranteed level of privacy protection, minimize redundant noise added to the data and maximize data utility for given private and target patterns.
    \item How to maintain a high-level of data utility when data consumers do not want to reveal their target patterns.
\end{itemize}
Differential privacy (DP) is a state-of-the-art framework for privacy protection with rigorous mathematical proof. Among DP-based models, local differential privacy (LDP) assumes that PPMs are executed before data is released from data producers, leading to a better generalization \cite{ren2022ldp}. Therefore, we propose a novel pattern-level DP that provides pattern-level privacy guarantees and multiple pattern-level PPMs based on LDP for both transparent and opaque data stream analysis. We evaluate the performance of the proposed PPMs and analyze their advantages over the related works. This article makes the following contributions:
\begin{itemize}
    \item We establish a theoretical foundation of pattern-level PPMs and propose a novel pattern-level DP, which enriches both the theories for pattern-level PPMs and DP regarding data streams and supports most CEP operators. It helps to reduce the redundancy for privacy protection in data streams and enables superior performance for PPMs under equally strong privacy guarantee, compared to non-pattern-level PPMs.
    \item We propose four pattern-level PPMs that satisfy pattern-level DP, regarding different context knowledge and trust relationships, i.e., transparent and opaque data stream analysis. 
    \item We evaluate and illustrate the advantages of our proposed PPMs, compared to non-pattern-level state-of-the-art PPMs and analyze their computational complexity.
\end{itemize}

The rest of the article is organized as follows: We introduce related works in \textbf{Section \ref{rw}}. The preliminaries including DP basics and the definition of events and patterns are presented in \textbf{Section \ref{dp}}. We explain the system model, the details of transparent and opaque data stream analysis, and their trust settings in \textbf{Section \ref{system model}}. In \textbf{Section \ref{pldp}}, we propose a novel pattern-level DP guarantee. We present pattern-level PPMs for transparent and opaque data stream analyses in \textbf{Section \ref{plppm}} and \textbf{\ref{ppm for o}}. We evaluate the proposed approaches in \textbf{Section \ref{eva}} and conclude this article in \textbf{Section \ref{con}}.

\section{RELATED WORKS} \label{rw}
It is not surprising that most existing PPMs designed for data stream processing (without the ability to detect complex events) are also applicable to CEP systems, as the input of a CEP system is at least one event stream~\cite{grez2017foundations}. Therefore, we discuss first the most relevant PPMs for data stream processing. Among them, the solutions based on DP are generally considered reliable~\cite{ren2022ldp, kellaris2014differentially}, and three types of DP are most widely discussed, i.e., user-level, event-level, and w-event DP \cite{dwork2010differential1, dwork2010differential2, chen2015private, kellaris2014differentially, chen2017pegasus}. They guarantee privacy by achieving indistinguishability between private and non-private information w.r.t. individual users, individual events, or events in each window for a data stream processing system, respectively \cite{dwork2010differential1, dwork2010differential2, chen2015private, kellaris2014differentially}. They usually transform or snapshot the infinite data stream to a static dataset and utilize DP in that pseudo-dataset~\cite{chen2017pegasus}. In addition to DP, some other studies based on, e.g., k-anonymity, are also proposed~\cite{sopaoglu2021classification, zhou2009continuous}. They can provide satisfactory data utility under guaranteed privacy protection. However, PPMs for data stream processing cannot leverage the potential of transforming the knowledge of complex events into tailored PPMs. The key advantage of PPMs for CEP over those for data stream processing is that the knowledge of private and non-private patterns can be used to avoid redundant privacy protection and unnecessary obfuscation of events which form these patterns. This in turn results in appropriate privacy protection and higher data utility.

Some studies have not fully leveraged the patterns of CEP systems but perform better compared to pure data stream processing PPMs \cite{katsomallos2022landmark, palanisamy2018preserving, wang2020towards}. They emphasize the practical representations of different data tuples and assign different privacy protections to them accordingly. Among them, Landmark privacy \cite{katsomallos2022landmark} assumes that some data tuples may contain significantly more valuable or more private information than others. Therefore, it adjusts the privacy budgets assigned to these tuples to improve its PPMs. However, for a CEP system, there are opportunities for further improvements because the patterns detected by a CEP system are usually much more complex than individual data tuples. Palanisamy et. al. \cite{palanisamy2018preserving} considered privacy protections based on patterns instead of data tuples by reordering detected events. However, it either fully blocks access to a pattern or publishes the pattern without any privacy protection. Therefore, it provides much weaker granularity than Landmark and other DP-based studies, and thus much weaker support for CEP operators. 

Unlike pattern-level PPMs for data stream processing systems, there are also pattern-aware approaches that focus on statistically significant sequences of data tuples \cite{wang2020towards,mao2024privshape,hu2024real}. These sequences are referred to as patterns in the related works, which are, however, distinct from the patterns defined for CEP systems and in this article. Two pattern-aware approaches based on DP are presented for data collection (PatternLDP)~\cite{wang2020towards} and pattern extraction (PrivShape)~\cite{mao2024privshape}. They usually process one-dimensional time series that are of the same data type, which is untypical for CEP systems that usually take multidimensional data streams of different attributes as input. Furthermore, patterns for CEP systems are constructed based on concrete applications and practical representations, which are not necessarily statistically significant. In conclusion, patterns as defined in the context of CEP systems do not correspond to the notion of patterns presented in these pattern-aware approaches. Another pattern-aware approach (RetraSyn) presents real-Time trajectory synthesis which focuses on preserving spatial-temporal patterns under LDP \cite{hu2024real}. Although it outperforms other state-of-the-art pattern-aware solutions on trajectory data, it is affected by the same disadvantages as focusing on statistically significant patterns and cannot be adapted to CEP systems.

Our previous work presented a pattern-based solution with DP guarantees for CEP systems~\cite{gu2023differential}. It increases data utility by adaptively assigning privacy budgets over events related to private patterns. However, this solution is limited because (1) it can only be applied to patterns formed by the sequence operator and (2) it assumes that all queries from data consumers are public, which may not be practical, since the queries can be confidential considering the business model of consumers. To the best of our knowledge, there is no solution that optimizes data utility and (1) provides DP-based privacy protection for private patterns, (2) supports all common operators of CEP systems, and (3) allows data consumers to keep their patterns of interest confidential.

\section{PRELIMINARIES} \label{dp}
In this section, we introduce the basics of DP for databases and data streams as the fundamentals of developing pattern-level DP. We also clarify the definitions of events and patterns needed to present the pattern-level DP and its PPMs.
\subsection{DIFFERENTIAL PRIVACY}
A mechanism that complies with DP ensures that an adversary cannot distinguish two similar groups of information based on the responses of queries to them. DP formulates a bound on the information that an adversary with arbitrary side-information and computational power can reveal.
\begin{definition} [$\epsilon$-differential privacy]
    A privacy mechanism $\mathcal{M}$: $\mathcal{I}$ $\rightarrow$ $\mathcal{O}$, where $\mathcal{I}$ is an arbitrary input dataset endowed with a neighboring relation N, is said to be $\epsilon$-differentially private if for any $\mathcal{O}_i \subseteq \mathcal{O}$,
    $$\Pr[\mathcal{M}(I_i) \in \mathcal{O}_i] \leq 
    e^{\epsilon} \cdot \Pr[\mathcal{M}(I'_i) \in \mathcal{O}_i], \hspace{0.3cm} 
    \forall I_i N I'_i,$$
    where $\epsilon$ is called \textbf{privacy budget}.
\end{definition}
The neighboring relation between $I_i$ and $I'_i$ indicates that $I'_i$ can be generated from $I_i$ by replacing, adding, or removing only one record from $I_i$. This ordinary DP based on static datasets can be modified to handle dynamically updated datasets, i.e., data streams \cite{kellaris2014differentially}. The neighboring relation is then modified to be the adjacency of data streams. It means that for two infinite data streams $S^D = (d_1, d_2, ...)$ and $S^{D'} = (d'_1, d'_2, ...)$  of data tuples $d$, there exists one and only one unique $i$ such that $d_i \neq d'_i$, leading to the following definition of the $\epsilon$-differential privacy for data streams \cite{dwork2010differential1}.
\begin{definition} [$\epsilon$-differential privacy for data streams]
    For an arbitrary infinite data stream $S^D$ and its neighboring stream $S^{D'}$, a privacy mechanism $\mathcal{M}$ that takes a data stream as the input and outputs $\mathcal{O}_i \subseteq \mathcal{O}$ is $\epsilon$-differentially private if for any $\mathcal{O}_i \subseteq \mathcal{O}$, it holds that $$\Pr[\mathcal{M}(S^D) \in \mathcal{O}_i] \leq 
    e^{\epsilon} \cdot \Pr[\mathcal{M}(S^{D'}) \in \mathcal{O}_i].$$
\end{definition}
DP for data streams guarantees indistinguishability between outputs when taking neighboring data streams as inputs. By modifying the definition of the neighboring relation, one may propose diverse DP guarantees with different granularities of privacy protections, different practical representations, and different application scenarios.
\begin{figure*}
\centering
    \subfloat[Privacy model for transparent data stream analysis\label{modela}]{\includegraphics[width=0.47\textwidth]{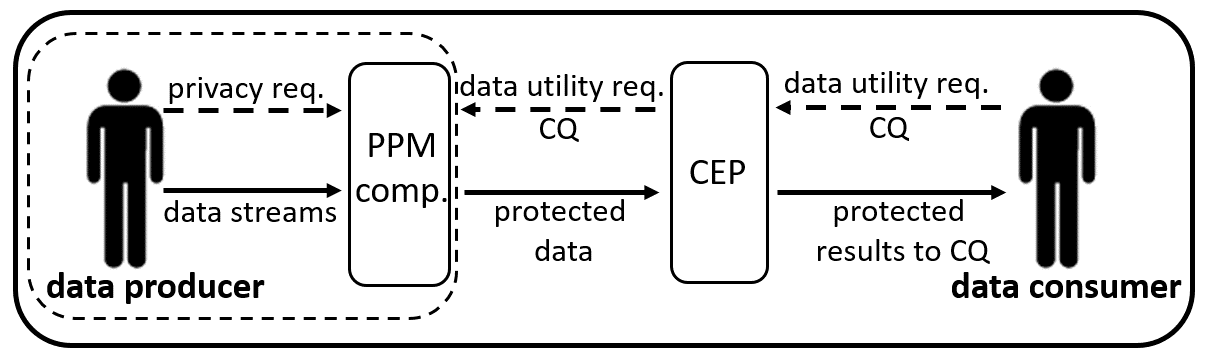}}%
    \subfloat[Privacy model for opaque data stream analysis\label{modelb}]{\includegraphics[width=0.47\textwidth]{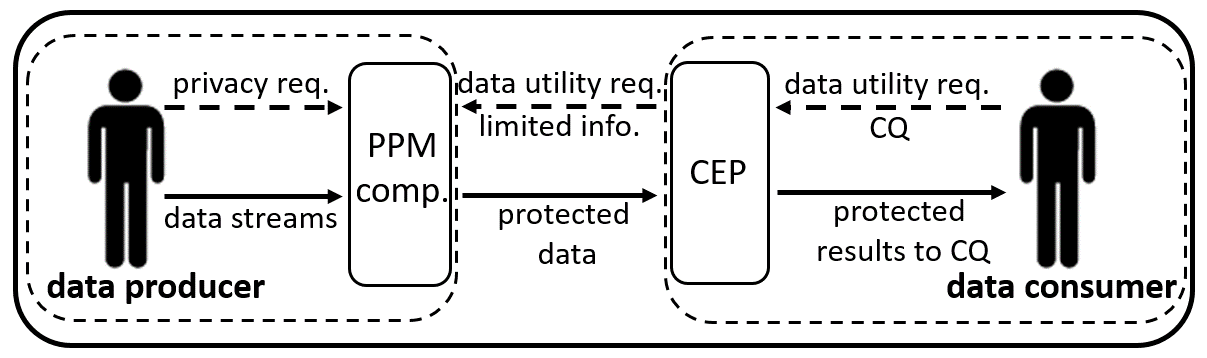}}%
\caption{Workflow of transparent and opaque data stream analysis. -{ }-{ }-$>$: the setup phase; \textemdash$>$: the service provision phase; Rectangles in dashed lines: trust domains; \textit{CQ}: continuous queries; \textit{req.}: requirements; \textit{info.}: information.}
\vspace{-7pt}
\label{all_privacy_models}
\end{figure*}

\subsection{EVENTS AND PATTERNS} \label{e and p}
It is a common understanding that events comprise timestamps and a set of attribute-value pairs. For better comprehensibility, this article uses a rudimentary case in which each event contains one attribute-value pair with one timestamp. It does not restrict the generalization of our approach, because any arbitrary event with $n$ attribute-value pairs can be regarded as $n$ events with one attribute-value pair. The values of events can be numerical or categorical. Extracting all events from a data stream forms an event stream $S^E = (e_1, e_2, ...)$, where $e_i$ denotes the $i^{th}$ produced event. Multiple data streams from different sources can be merged into one event stream, where $e_i$ is the $i^{th}$ extracted event. Within an event stream, an ordered sequence of events may contain valuable information that we denote as a pattern $P = Q(e^Q_1, e^Q_2, ..., e^Q_m)$, where $Q$ is a query formed by data stream operators that can reveal the information contained in these events. A pattern stream $S^P = (P_1, P_2, ...)$ can then be formed by extracting all patterns from an event stream, where $P_i$ is the $i^{th}$ detected pattern. Some detected patterns may contain private information of data producers and are classified as \textbf{private patterns}, while all others are \textbf{public patterns}. Furthermore, for a data consumer, the patterns that are of their interest are their \textbf{target patterns}. 

In the context of the Taxi example, data streams contain GPS records of all taxis with passengers. We collect the records of interest and merge them into an event stream. The sequence of several events forms a pattern, e.g., a taxi drives to a hospital. Different combinations of events may form the same type of patterns. For example, both when a taxi approaches a hospital from north and south are the same type of patterns that the taxi drives to the hospital. Therefore, we see a demand to distinguish specific patterns from the type of patterns. We formally clarify that a \textbf{pattern} $P = Q(e^Q_1, e^Q_2, ..., e^Q_m)$ referred to in this article is a specific combination of events $e^Q_i$ that can be retrieved by query $Q$ in a given event stream. Meanwhile, a \textbf{pattern type} $\mathcal{P}$ is a group of patterns specified by a query $Q_\mathcal{P}$. A pattern $P_i$ is an element of $\mathcal{P}$, i.e., $P_i \in \mathcal{P}$, if and only if it can be retrieved by $Q_\mathcal{P}$. Since pattern-level PPMs focus on patterns instead of basic events, their performance metric w.r.t. data utility can be a combination of both recall and precision of detecting target patterns in data streams. Inspired by statistical classification methods, we introduce $F_1$-score as the performance metric of PPMs w.r.t. data utility~\cite{sasaki2007truth}:
\begin{equation}
    U = F_1 = 2\frac{\mathit{precision} \cdot \mathit{recall}}{\mathit{precision} + \mathit{recall}}.
\end{equation}

\section{SYSTEM MODEL AND TRUST SETTINGS}\label{system model}
This work is based on the concept of LDP and a system model consisting of four basic components (see Figure~\ref{all_privacy_models}): 
\begin{itemize}
    \item \textbf{Data producers} share their data, specify their private patterns, and demand that they be protected. 
    \item The \textbf{PPM component} deploys PPMs to protect private patterns and maintain sufficient data utility.
    \item The \textbf{CEP engine} continuously analyzes privacy-protected data streams received from the PPM component and forwards the detected target patterns to data consumers. 
    \item \textbf{Data consumers} submit queries that describe the target patterns to the CEP engine and declare their requirements for data utility needs. Data consumers are potential privacy adversaries, following the \textit{honest-but-curious} threat model. 
\end{itemize}
In Figure~\ref{all_privacy_models}, the rounded rectangles in dashed lines denote the trust domains where the original information within is by default not accessible by components outside the domain.

The system operates in two distinct phases: the set-up phase and the service provision phase. During the set-up phase (illustrated by dashed arrows in Figure~\ref{all_privacy_models}) data producers specify their privacy requirements, i.e., private patterns while data consumers specify queries for their target patterns and their data utility requirements. This information enables the PPM component to configure a tailored PPM that meets the specified requirements. Additionally, the data consumer registers the queries for the target patterns in the CEP component. In the service provision phase (illustrated by solid arrows in Figure~\ref{all_privacy_models}), original data from the data producer is streamed to the PPM, which protects the private patterns while preserving the required utility of data elements forming the target patterns. Subsequently, the output of the PPM is streamed to the CEP component, which continuously processes the consumers’ queries against the incoming data stream to detect the specified target patterns. All detected target patterns are then forwarded to the data consumer.

Given that this work builds on the principals of LDP, the PPM component is consistently deployed in the trust domain of the data producer. Since the output of the PPM is privacy protected, the data producer does not need to establish any trust relation with the CEP component. However, the relationship with the data consumer can vary, leading to two distinct scenarios: (1) the data consumer is unconcerned sharing the target pattern queries with the PPM component and an untrusted CEP engine or, (2) the data consumer prefers to conceal target pattern queries to untrusted entities. For instance, in a video hosting platform utilizing a machine learning based recommendation system, confidentiality is vital for its business model. In such cases, the data consumer does not provide the target patterns to the PPM components and requires a trusted CEP component to register and process the target pattern queries. When target pattern queries are visible to the PPM component, this is called \textbf{transparent data stream analysis} (Figure~\ref{modela}); conversely, if target pattern queries are hidden, it is referred to as \textbf{opaque data stream analysis} (Figure~\ref{modelb}). In case of transparent data stream analysis the PPM component leverages the knowledge of private and target patterns to optimally tailor the PPM. In opaque data stream analysis, however, the PPM lacks access to target patterns, limiting its capacity for optimization. Nonetheless, rather than entirely forgoing the insights obtainable from data consumers, we propose a federated approach between the PPM component and data consumers that does not require disclosing target patterns. In the simplest case, the data consumers provide the number of events that have to be analyzed to detect a target pattern, called the \textbf{observation span} of a target pattern. With this limited information, our pattern-level PPMs can still significantly enhance data utility performance compared to other PPMs, providing a compelling incentive for data consumers to collaborate with the PPM component.

\section{PATTERN-LEVEL DIFFERENTIAL PRIVACY}\label{pldp}
In this section, we propose a novel DP guarantee, which achieves $\epsilon$-DP w.r.t. patterns and is hence named pattern-level DP. Its design consists of three steps: 
\begin{itemize}
    \item We define the neighboring relation between individual patterns, i.e., in-pattern neighbors (Definition 3), because it forms the fundamental of the neighboring relation between pattern streams.
    \item We define the neighboring relation between pattern streams, i.e., pattern-level neighbors (Definition 4), because it affects the characteristics and granularity of the privacy protection guaranteed by pattern-level DP.
    \item We propose the pattern-level DP (Definition 5) in pattern streams for a given pattern type $\mathcal{P}$, based on the defined neighboring relations.
\end{itemize}

\begin{definition}[in-pattern neighbors]
    Two patterns $P = Q(e^Q_1, e^Q_2,$ $ ..., e^Q_m),$ and $P' = Q'(e^{Q'}_1, e^{Q'}_2, ..., e^{Q'}_m),$ of the same pattern type $\mathcal{P}$ and the same length are \textbf{in-pattern neighbors of $\mathcal{P}$} if and only if there exists a unique $i$ such that (1)~$e^Q_i$ and $e^{Q'}_i$ only differ by their attribute-value pairs and (2) for all $j \neq i$, $e^Q_j = e^{Q'}_j$.
\end{definition}
In-pattern neighboring relation clarifies the least possible difference between two most similar patterns. This means that two neighboring patterns of the same pattern type $\mathcal{P}$ can only differ by one single event, which makes them one of the most similar patterns to each other in $\mathcal{P}$. Given Definition 3, we can define the neighboring relation for pattern streams.
\begin{definition}[pattern-level neighbors]
    Given a pattern type $\mathcal{P}$, and two infinite pattern streams $S^P = (P_1, P_2, ...)$ and $S^{P'} = (P'_1, P'_2, ...)$, $S^P$ and $S^{P'}$ are \textbf{pattern-level neighbors} w.r.t. $\mathcal{P}$ if and only if there exists a unique integer $i$ such that (1) $P_i$ and $P'_i$ are in-pattern neighbors of $\mathcal{P}$, and (2) for all $j \neq i$, $P_j = P'_j$ holds.
\end{definition}
The pattern-level neighboring relation clarifies the granularity of the indistinguishability guaranteed by the pattern-level DP. In practice, this neighboring relation describes two distinguishable yet least different pattern streams for a typical CEP system because two pattern-level neighboring streams only differ by the neighboring patterns, which differ by one basic event. The pattern-level DP is then proposed as follows.
\begin{definition}[pattern-level DP]
    Assume that $\mathcal{M}$ is a privacy mechanism that takes a pattern stream as input and outputs a response $R$ that belongs to the group of all possible responses $\mathcal{R}$. Then $\mathcal{M}$ satisfies \textbf{pattern-level} $\boldsymbol{\epsilon}$\textbf{-DP} of a given pattern type $\mathcal{P}$ (pattern-level DP of $\mathcal{P}$) if and only if for any pattern-level neighbors $S^P$ and $S^{P'}$ of $\mathcal{P}$ and any sets of response $\mathcal{R}_i \subseteq \mathcal{R}$, 
    $$\Pr[\mathcal{M}(S^P) \in \mathcal{R}_i] \leq 
    e^{\epsilon} \cdot \Pr[\mathcal{M}(S^{P'}) \in \mathcal{R}_i].$$
\end{definition}
Pattern-level DP of $\mathcal{P}$ guarantees privacy for $P_i \in \mathcal{P}$. In practice, $\mathcal{P}$ are usually private pattern types of data producers. Mechanisms that fulfill pattern-level DP output similar query results regardless of the existence of private patterns. Pattern-level DP provides a more customized privacy protection compared to non-pattern-level DPs, which allows PPMs to reduce the privacy budgets $\epsilon$ assigned to less important events and utilize the budgets more efficiently, leading to better privacy-utility trade-off and overall superior performance. Although a precise definition of target patterns is helpful in improving data utility, the overall privacy budgets assigned to private patterns are fixed. This indicates that even if the data utility is impaired by imprecise target patterns, the privacy protection is not weakened. For target patterns, there is no limitation for the CEP operators used to form these patterns. However, for private patterns, the used operators must be able to catch the difference between patterns caused by their different belonging events. In other words, private patterns must have in-pattern neighbors and must be able to form pattern-level neighbors. Most CEP operators, e.g., Max/Min operators, average operators, and followed by operators, satisfy this requirement.

\section{PATTERN-LEVEL PPMS FOR TRANSPARENT DATA STREAM ANALYSIS}\label{plppm}
In Section \ref{dp}.\ref{e and p}, we classify all events into two categories, i.e., categorical events and numerical events. To achieve DP, we usually attach noise to individual events. However, the noise that is appropriate for numerical events, e.g., age, is usually improper for categorical events, e.g., nationality. Therefore, for different types of events, we attach different types of noise to protect their contained private information. In detail, we present the \textbf{randomized response} for categorical events, which provides a random selection among all possible categories with a set of assigned probabilities. For numerical events, we apply the \textbf{Laplace mechanism}, which adds Laplace noise to events based on a Laplace distribution. The randomized response and Laplace mechanism are proven to be differentially private \cite{dwork2014algorithmic}. However, this article presents a novel DP guarantee, i.e., pattern-level DP, which is distinct from the ordinary DP. Therefore, in order to utilize the randomized response and Laplace mechanism to achieve pattern-level DP, their pattern-level DP characteristics must also be verified, apart from their ordinary DP characteristics. In such cases, they are also required to be modified to match pattern-level privacy protection measures. We first propose the randomized response for pattern streams and verify its pattern-level DP characteristics, followed by the definition, verification, and proofs for the Laplace mechanism.

\begin{definition}[Randomized response (pattern streams)]
Given a pattern stream $S^P = (P_1, P_2, ...)$ that consists of patterns $P = Q(e^{Q}_1, e^{Q}_2, ..., e^{Q}_m)$, then a privacy mechanism $\mathcal{M}$ offers a randomized response if it takes the existence of events $I(e^{Q_i}_i) \in \{0, 1\}$ as input and outputs responses $R_i \in \{0, 1\}$ for each event with probability where $j \neq k$:
\begin{equation*}
    \begin{cases}
        \Pr(R_i = j | I(e_i) = j) = 1- p_i & \\
        \Pr(R_i = j | I(e_i) = k) = p_i, & \\
    \end{cases}
\end{equation*}
\end{definition}
To verify the pattern-level DP characteristics of the randomized response, we start with the DP characteristics w.r.t. the events of a given pattern in a pattern stream. Subsequently, we derive the result to an entire pattern stream.
\begin{lemma}
Given a pattern stream $S^P = (P_1, P_2, ...)$ consisting of patterns $P = Q(e^{Q}_1, e^{Q}_2, ..., e^{Q}_m)$, for a privacy mechanism $\mathcal{M}_1$ that (1) offers a randomized response for $e_i$ with $p_i \leq \frac{1}{2}$, (2) takes $\boldsymbol{I}(\boldsymbol{e}) = (I(e_1), I(e_2), ...,I(e_n))$ as inputs, and (3) outputs responses $\boldsymbol{R} = (R_1, R_2, ..., R_n)$, it guarantees $\ln{\frac{1-p_i}{p_i}}$-pattern-level DP w.r.t. a given type of pattern $\mathcal{P}$.
\end{lemma}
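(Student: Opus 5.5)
The plan is to reduce the pattern-level guarantee to a single-event likelihood ratio and then bound that ratio by $\frac{1-p_i}{p_i}$. Fix two pattern-level neighbors $S^P$ and $S^{P'}$ w.r.t.\ $\mathcal{P}$. By Definition 4 they differ in exactly one position $i$, where $P_i$ and $P'_i$ are in-pattern neighbors. By Definition 3, these in turn differ in exactly one event. So the input vectors $\boldsymbol{I}(\boldsymbol{e})$ and $\boldsymbol{I}(\boldsymbol{e}')$ fed to $\mathcal{M}_1$ coincide in every coordinate except a single index, say $k$. There $I(e_k)=j$ and $I(e'_k)=j'$, with $j \neq j'$ in the nontrivial case; if $j=j'$ the two output distributions are identical and the bound is immediate.

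Next I would use the fact that the randomized response of Definition 6 acts independently on each event. For any fixed output vector $\boldsymbol{r}=(r_1,\dots,r_n)$ this gives
\begin{equation*}
\Pr[\boldsymbol{R}=\boldsymbol{r}\mid \boldsymbol{I}(\boldsymbol{e})]=\prod_{l=1}^{n}\Pr[R_l=r_l\mid I(e_l)].
\end{equation*}
Taking the ratio of this product for the two neighboring inputs, every factor with $l\neq k$ cancels. Only $\Pr[R_k=r_k\mid I(e_k)=j]/\Pr[R_k=r_k\mid I(e_k)=j']$ remains, and it equals either $\frac{1-p_k}{p_k}$ or $\frac{p_k}{1-p_k}$, depending on whether $r_k=j$. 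Since $p_k\le \frac{1}{2}$, we have $\frac{p_k}{1-p_k}\le 1\le\frac{1-p_k}{p_k}$, so in all cases the ratio is at most $\frac{1-p_k}{p_k}=e^{\ln\frac{1-p_k}{p_k}}$. Here $p_k \le \frac12$ is exactly what makes the logarithm nonnegative, so it is a legitimate budget.

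To pass from single outputs to arbitrary response sets $\mathcal{R}_i\subseteq\mathcal{R}$, I would sum the pointwise inequality over all $\boldsymbol{r}\in\mathcal{R}_i$. The output space is finite (binary vectors), or countable if we view the stream as infinite and use cylinder sets. The summed inequality is exactly the condition of Definition 5.

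The main obstacle is being honest about which event is ``$e_i$''. The lemma states the budget $\ln\frac{1-p_i}{p_i}$ indexed by a single event, while the differing event $k$ is an arbitrary position determined by the neighbor pair. I would resolve this by reading the statement as concerning the (worst-case) differing event, i.e.\ the guarantee holds with budget $\max_k \ln\frac{1-p_k}{p_k}$ over events that can belong to a pattern of $\mathcal{P}$, equivalently $\ln\frac{1-p_i}{p_i}$ when the $p_i$ are chosen uniformly for those events. A secondary subtlety is the infinite stream. I would handle it by noting that the factorization only ever involves the finitely many coordinates relevant to the measurable event, with the remaining coordinates having identical marginals under both inputs.
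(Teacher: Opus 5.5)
Your proposal is correct and follows the same route as the paper. Both arguments reduce the output-probability ratio of the two neighboring streams to the randomized-response ratio of the single differing event, and then bound that ratio by $\frac{1-p_i}{p_i}$. Your version additionally spells out steps that the paper's one-line proof leaves implicit: the per-event factorization and cancellation, the role of $p_i\le\frac12$, the passage from individual outputs to arbitrary response sets, and the reading of $i$ as the (worst-case) differing event.
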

\begin{proof}
For two pattern-level neighbors $S^P = (P_1, P_2, ...)$ and $S^{P'} = (P'_1, P'_2, ...)$ of pattern type $\mathcal{P}$, if given any randomized response mechanism $\mathcal{M}_1$ as described above, then
\begin{equation*}
\frac{\Pr[\mathcal{M}_1(S^P) \in \mathcal{R}]}{\Pr[\mathcal{M}_1(S^{P'}) \in \mathcal{R}]} = {\frac{\Pr[\mathcal{M}_1(I(e_i)) = R_i]}{\Pr[\mathcal{M}_1(I(e_i')) = R_i]}} \leq {\frac{1-p_i}{p_i}}. 
\end{equation*}
\end{proof}
It proves that $\mathcal{M}_1$ guarantees $\ln{\frac{1-p_i}{p_i}}$-pattern-level DP for the above settings. We subsequently study the pattern-level DP for any pattern in a pattern stream.
\begin{theorem}
Given a pattern stream $S^P = (P_1, P_2, ...)$ consisting of patterns, a privacy mechanism $\mathcal{M}$ that offers a randomized response for $P = Q(e^{Q}_1, e^{Q}_2, ..., e^{Q}_n)$ that belongs to a given pattern type $\mathcal{P}$ with $p_1, p_2, ..., p_n \leq \frac{1}{2}$. If it takes $\boldsymbol{I}(\boldsymbol{e}) = (I(e_1), I(e_2), ...,I(e_n))$ as inputs and outputs responses $\boldsymbol{R} = (R_1, R_2, ..., R_n)$, $\mathcal{M}$ guarantees $\sum_{i:e_i \in P}{\ln{\frac{1-p_j}{p_j}}}$-pattern-level DP with respect to a given type of pattern $\mathcal{P}$.
\end{theorem}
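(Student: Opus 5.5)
The plan is to reduce the statement to Lemma 1 applied event by event, combined with the fact that the mechanism randomizes each event independently. Fix a pattern type $\mathcal{P}$ and two pattern-level neighbors $S^P$ and $S^{P'}$ of $\mathcal{P}$. By Definition 4 there is a unique index $i$ with $P_i$ and $P'_i$ in-pattern neighbors, and $P_j = P'_j$ for every $j \neq i$. The target is the bound for arbitrary response sets $\mathcal{R}_i \subseteq \mathcal{R}$. Since the output space is discrete (binary vectors), it suffices to bound the likelihood ratio for each single output vector $\boldsymbol{R}$; summing over $\boldsymbol{R} \in \mathcal{R}_i$ then preserves the multiplicative bound.

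First, I would use the independence of the randomized responses across events (Definition 6 applies coin flips per event) to factor the output probability:
\[
\Pr[\mathcal{M}(S^P)=\boldsymbol{R}] = \prod_{k} \Pr[R_k \mid I(e_k)].
\]
Every event not belonging to the differing pattern $P_i$ has identical input in both streams, so its factor cancels in the ratio. Only the factors for the events $e_1,\dots,e_n$ of the pattern $P = Q(e^Q_1,\dots,e^Q_n) \in \mathcal{P}$ remain, giving
\[
\frac{\Pr[\mathcal{M}(S^P)=\boldsymbol{R}]}{\Pr[\mathcal{M}(S^{P'})=\boldsymbol{R}]}=\prod_{j: e_j \in P}\frac{\Pr[R_j\mid I(e_j)]}{\Pr[R_j \mid I(e'_j)]}.
\]

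Next, I bound each factor as in the proof of Lemma 1. If $I(e_j)=I(e'_j)$, the factor equals $1$. Otherwise it is either $(1-p_j)/p_j$ or $p_j/(1-p_j)$, and because $p_j \le \tfrac12$ both are at most $(1-p_j)/p_j$, which is itself $\ge 1$. Hence each factor is at most $(1-p_j)/p_j$. Multiplying these bounds and taking logarithms yields $\epsilon = \sum_{j: e_j\in P}\ln\frac{1-p_j}{p_j}$, which is the claimed guarantee in the sense of Definition 5.

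The main obstacle is justifying why the bound must cover all events of $P$ rather than the single differing event. In-pattern neighbors differ in exactly one event, but which event that is depends on the neighbor pair, and the mechanism cannot know it in advance. The bound must therefore hold uniformly over all in-pattern neighbors. Taking the worst case, or equivalently the product bound above, in which every factor is at most $(1-p_j)/p_j \ge 1$, the sum over all $j$ with $e_j \in P$ dominates each individual $\ln\frac{1-p_j}{p_j}$. It is also a valid bound even under the looser reading in which the existence indicators of several events of the pattern may change. I would state this worst-case reading explicitly, and note that the index in the statement's sum should read $j$ throughout.
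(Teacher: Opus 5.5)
Your proposal is correct and follows the paper's own proof: the paper factors the likelihood ratio into a product over events $e_j \notin P$, which cancel, and a product over events $e_j \in P$, each bounded by $(1-p_j)/p_j$ using $p_j \le \tfrac12$. Your extra points make explicit what the paper leaves implicit: the reduction from response sets to single output vectors, the trivial factor when $I(e_j)=I(e'_j)$, and the fact that the sum is a valid but loose bound compared with the single-event bound of Lemma 1.
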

\begin{proof}
For two pattern-level neighbors $S^P = (P_1, P_2, ...)$ and $S^{P'} = (P'_1, P'_2, ...)$ of pattern type $\mathcal{P}$, if given any randomized response mechanism $\mathcal{M}$ as described above, then
\begin{equation*}
\begin{split}   
\frac{\Pr[\mathcal{M}(S^P) \in \mathcal{R}]}{\Pr[\mathcal{M}(S^{P'}) \in \mathcal{R}]} 
& = \prod_{i:e_i \not\in P}{\frac{\Pr[\mathcal{M}(I(e_i)) = R_i]}{\Pr[\mathcal{M}(I(e_i')) = R_i]}} 
\\& \quad \cdot \prod_{i:e_i \in P}{\frac{\Pr[\mathcal{M}(I(e_i)) = R_i]}{\Pr[\mathcal{M}(I(e_i')) = R_i]}} 
\\& \leq \prod_{j:e_i \in P}{\frac{1-p_i}{p_i}}. 
\end{split}
\end{equation*}
\end{proof}
It proves that $\mathcal{M}$ guarantees $\ln{\prod_{j:e_j \in P}{\frac{1-p_j}{p_j}}}$-pattern-level DP, i.e., $\sum_{j:e_j \in P}{\ln{\frac{1-p_j}{p_j}}}$-pattern-level DP. Therefore, applying randomized response to related events leads to pattern-level differentially private outputs. We then present the Laplace mechanism and verify that it satisfies pattern-level DP. We first introduce an important concept of Laplace mechanisms, i.e., the sensitivity of a privacy mechanism \cite{dwork2014algorithmic} and define Laplace mechanisms.
\begin{definition} [Sensitivity of a privacy mechanism]
    Given a privacy mechanism $\mathcal{M}$ that takes $e_i$ as inputs and outputs responses $R_i$. The sensitivity of $\mathcal{M}$ is then defined as $\mathbb{S(\mathcal{M})} \triangleq \sup|R_i-R_j|$, where $e_i$ and $e_j$ are neighboring.
\end{definition}
\begin{definition}[Laplace mechanisms (pattern streams)]
    Given a pattern stream $S^P = (P_1, P_2, ...)$ that consists of patterns $P = Q(e^{Q}_1, e^{Q}_2, ..., e^{Q}_n)$, a privacy mechanism $\mathcal{M_L}$ is a Laplace mechanism, if it takes $\boldsymbol{I}(\boldsymbol{e}) = (I(e_1), I(e_2), ...,I(e_n))$ as inputs and outputs responses $\boldsymbol{R} = (R_1, R_2, ..., R_n)$, where $R_i = I(e_i) + \omega_i$, and $\omega_i \sim \mathit{Laplace}_i$.
\end{definition}
We study the pattern-level DP characteristics of the Laplace mechanism following the same procedure of the randomized response mechanism.
\begin{theorem}
    For two pattern-level neighbors $S^P = (P_1, P_2, ...)$ and $S^{P'} = (P'_1, P'_2, ...)$ of a pattern type $\mathcal{P}$, a Laplace mechanism $\mathcal{M_L}$ applied to any pattern $P \in \mathcal{P}$ with sensitivity $\mathbb{S(\mathcal{M})}$ guarantees ${\sum_{i:e_i \in P}{{\mathbb{S(\mathcal{M})}}/\lambda_j}}$ pattern-level DP.
\end{theorem}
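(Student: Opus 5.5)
The plan follows the same two-stage template the paper uses for the randomized response: first bound the privacy loss contributed by a single event of the differing pattern (the Laplace analogue of Lemma 1), then multiply over the events of the pattern as in the proof of Theorem 1. Throughout, I take $\mathit{Laplace}_i$ to be the zero-mean Laplace distribution with scale $\lambda_i$, with density $\frac{1}{2\lambda_i}e^{-|x|/\lambda_i}$. I read the index $j$ in the statement as the summation index $i$; the same typographical mismatch appears in Theorem 1.

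\textbf{Single-event bound.} Take pattern-level neighbors $S^P$ and $S^{P'}$. By Definition 4 they differ only in one pattern $P_i$ versus $P'_i$, and these are in-pattern neighbors. By Definition 3 they differ in a single event, but since we want a bound that holds uniformly for every event of $P$, I will bound the ratio event by event. For a fixed event $e_i$ with neighboring input $e'_i$, the output density ratio at any point $r$ is
\begin{equation*}
\frac{\exp(-|r-I(e_i)|/\lambda_i)}{\exp(-|r-I(e'_i)|/\lambda_i)} = \exp\Big(\frac{|r-I(e'_i)|-|r-I(e_i)|}{\lambda_i}\Big).
\end{equation*}
By the triangle inequality and Definition 7, this is at most $\exp(|I(e_i)-I(e'_i)|/\lambda_i)\le \exp(\mathbb{S}(\mathcal{M})/\lambda_i)$. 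This is the standard Laplace argument from \cite{dwork2014algorithmic}.

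\textbf{Factorization.} The noise variables $\omega_i$ are independent across events, so the joint density of $\boldsymbol{R}$ factors over events. I split the product into events not in $P$ and events in $P$, exactly as in Theorem 1. The first factor has ratio $1$, because those inputs coincide in $S^P$ and $S^{P'}$. The second factor is at most $\prod_{i:e_i\in P}\exp(\mathbb{S}(\mathcal{M})/\lambda_i)$. Integrating the pointwise density bound over any measurable $\mathcal{R}_i\subseteq\mathcal{R}$ gives
\begin{equation*}
\Pr[\mathcal{M_L}(S^P)\in\mathcal{R}_i]\le e^{\sum_{i:e_i\in P}\mathbb{S}(\mathcal{M})/\lambda_i}\,\Pr[\mathcal{M_L}(S^{P'})\in\mathcal{R}_i],
\end{equation*}
which is Definition 5 with $\epsilon=\sum_{i:e_i\in P}\mathbb{S}(\mathcal{M})/\lambda_i$.

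\textbf{Main obstacle.} The outputs are continuous, so the step from a probability ratio to a bound needs care. The paper's Theorem 1 writes the ratio of probabilities of events directly, but here one must bound densities pointwise and then integrate over the response set; a ratio of probabilities of sets is not a product of per-coordinate ratios. A second subtlety is that the stream is infinite. I would handle this by noting that only finitely many coordinates differ: the joint law is a product measure, and the coordinates outside $P$ have identical marginals under both inputs. The ratio of the two laws therefore reduces to a finite-dimensional density ratio over the coordinates of $P$, and the pointwise bound applies there.
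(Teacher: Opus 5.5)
Your proposal is correct and follows the paper's proof: factor the likelihood ratio over events, note that events outside $P$ contribute a factor of $1$, bound each event of $P$ by $e^{\mathbb{S}(\mathcal{M})/\lambda_i}$, and multiply to obtain the exponent $\sum_{i:e_i\in P}\mathbb{S}(\mathcal{M})/\lambda_i$. Your version is more careful than the paper's, which writes the ratio of set probabilities directly as a product of per-event point probabilities; bounding the joint density pointwise, integrating over $\mathcal{R}_i$, and reducing the infinite product measure to the finitely many coordinates of $P$ is the rigorous form of that step.
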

\begin{proof}
For two pattern-level neighbors $S^P = (P_1, P_2, ...)$ and $S^{P'} = (P'_1, P'_2, ...)$ of a pattern type $\mathcal{P}$, if given any randomized response mechanism $\mathcal{M}$ as described above, then we see that  
\begin{equation*}
\centering
\begin{split}
\frac{\Pr[\mathcal{M_L}(S^P) \in \mathcal{R}]}{\Pr[\mathcal{M_L}(S^{P'}) \in \mathcal{R}]} 
& = \prod_{j:e_j \not\in {P}}{\frac{\Pr[\mathcal{M_L}(I(e_j)) = R_j]}{\Pr[\mathcal{M_L}(I(e_j')) = R_j]}} 
\\ & \quad \cdot 
\prod_{i:e_i \in {P}}{\frac{\Pr[\mathcal{M_L}(I(e_i)) = R_i]}{\Pr[\mathcal{M_L}(I(e_i')) = R_i]}} 
\\ &  \leq 
\prod_{i:e_i \in {P}}{e^{{\mathbb{S(\mathcal{M})}}/\lambda_i}} = e^{\sum_{i:e_i \in {P}}{{\mathbb{S(\mathcal{M})}}/\lambda_i}}.
\end{split}
\end{equation*} 
\end{proof}
It is proven that $\mathcal{M}$ guarantees ${\sum_{i:e_i \in {P}}{{\mathbb{S(\mathcal{M})}}/\lambda_i}}$-pattern-level DP, and hence, both the randomized response and Laplace mechanism satisfy pattern-level DP. Based on them, we subsequently present two pattern-level PPMs for transparent and opaque data stream analysis, i.e., a uniform approach and a bidirectional step-wise approach.

\subsection{THE UNIFORM APPROACH}
The previous sections show that the privacy budgets assigned to each event, i.e., $\epsilon$ of pattern-level DP, eventually accumulate in the pattern stream. Therefore, with a given total privacy budget $\epsilon$, we can distribute it to all \textbf{relevant events} which belong to private patterns and can be revealed by data consumers when querying target patterns. This leads to the most critical difference in the budget distribution between pattern-level DP and the DPs proposed by the related works. In the related works, the privacy budgets are usually assigned among a sequence of events listed with temporal orders. For example, regarding the w-event DP with a window of size $w$ \cite{ren2022ldp}, the privacy budgets are distributed among the events from timestamp $t$ to timestamp $t + w - 1$. The specific distribution can be improved to reach a better data utility, e.g., by dividing all data producers into groups and arranging the timestamp of releasing the data of each data producer to enhance the privacy-utility trade-off \cite{ren2022ldp}. Unlike the related works, the pattern-level DP assigns privacy budgets only to those events that are relevant to private and target patterns, which enables a more flexible and adaptive privacy budget distribution than, e.g., w-event privacy.

An intuitive approach is to uniformly distribute $\epsilon$, which is named the \textit{uniform} approach. Given a private pattern $P = Q(e^{Q}_1, e^{Q}_2, ..., e^{Q}_n)$, we denote the privacy budget assigned to $e^{Q}_i$ as $\epsilon_i$. For a randomized response, we have proven that $\epsilon_i = \ln{\frac{1-p_i}{p_i}}$, while for a Laplace mechanism, $\epsilon_i = \mathbb{S(\mathcal{M})}/\lambda_i$ holds. Therefore, the probabilities $p_i$ for the randomized response and $\lambda_i$ for the Laplace mechanism are the only parameters relevant to the privacy budget distribution, when the sensitivity $\mathbb{S(\mathcal{M})}$ is constant for each event. In practice, the sensitivity is determined by the boundary values of the attributes in the given data stream, which are usually fixed. We can therefore regard the sensitivity as a constant, and the problem of assigning privacy budgets is then equivalent to determining $p_i$ and $\lambda_i$. Here, $p_i$ controls the probability of obtaining correct responses when querying target patterns, while $\lambda_i$ controls the amount of noise attached to an event. Therefore, they are also the only parameters that affect the data utility, in terms of $\mathit{MRE}_U$. Optimizing $p_i$ and $\lambda_i$ is then the only problem in optimizing the performance of PPMs. 

Furthermore, since both $\lambda_i$ and $p_i$ are independent of $\epsilon_j$ if $i \neq j$, the privacy budget distributions for numerical events and categorical events are also independent from each other, if we do not employ both types of mechanism on the same event. It indicates that one combination of privacy budgets corresponds to one combination of $\lambda_i$ and $p_i$. We can therefore employ the Laplace mechanisms for numerical events and meanwhile the randomized response for categorical events, without considering the overlapping effects between them. In conclusion, for a given trust relationship, the optimal $\epsilon_i$ will lead to optimal sets of $\lambda_i$ and $p_i$ and hence to the optimal performance of PPMs. In such cases, although multiple independent private patterns overlap each other, i.e., share certain events, their privacy budget distributions are still independent. In other words, their shared events are assigned different and independent privacy budgets multiple times for each private pattern to which the events belong. 

\subsection{THE BIDIRECTIONAL STEP-WISE APPROACH}
The \textit{uniform} approach considers all relevant events equally important. However, some private events may be critical for detecting target patterns, while containing little private information. It is more beneficial to assign more privacy budgets to these events, leading to weaker privacy protection but higher data utility. Inspired by statistical learning, we propose a bidirectional step-wise algorithm, i.e., the \textit{step-wise} approach, that utilizes historical data to optimize $\epsilon_i$, as demonstrated in Algorithm 1. Algorithm 1 is executed only before starting a data stream. This indicates that the privacy budgets assigned to the events of the same type remain unchanged for a run-time data stream. The data utility can be improved by executing Algorithm 1 before streaming the data, because (1) the schema, (2) the compositions of privacy patterns, and (3) the compositions of target patterns in a flowing data stream are always fixed.

We first evenly distribute the privacy budget to all relevant events that belong to private patterns and can be revealed by data consumers while querying target patterns. After creating an ordered list for these events, we increase the privacy budget assigned to the first event while equally reducing the budgets for other events. If the data utility increases, we maintain this change and otherwise abandon it. We iterate this procedure for each event in the list for multiple times, until the available computation time runs out or the data utility reaches a predefined expectation. In ideal cases, Algorithm 1 is capable of finding the optimal $\epsilon_i$, as well as the optimal $p_i$ and $\lambda_i$, and provides significantly superior performance than the \textit{uniform} approach. In practice, Algorithm 1 may not reach the theoretical optimum, but its performance can converge to the optimum by consuming more up-to-date field data. There are two critical assumptions for Algorithm 1: (1) The expected statistical distribution of the collected historical data also applies to the coming data in the pattern stream. (2) The schema of historical data and the future pattern streams strictly match each other. Infractions to them can degrade performance.
\begin{algorithm}
\begin{algorithmic}
 \STATE 1. Distribute the total privacy budget $\epsilon$ evenly to all the $m$ events that are relevant to private patterns $\epsilon_i = \frac{\epsilon}{m}$.\\
 \STATE 2. Select the size of each step based on field experience. A suggestion is $\delta_\epsilon = \frac{\epsilon}{100m}$.\\
 \STATE 3. Calculate data utility $U$; Set $U_1 = U_2 = ... = U_m = U$.\\
 \WHILE{$\epsilon_1, \epsilon_2,...,\epsilon_m  \in [0, \epsilon]$ and $U_{i:U_i = \max{U_i}} \geq U$}{
    \STATE 4.1. Set $i = 1$. Set $U = U_{i:U_i = \max{U_i}}$.\\
    \WHILE{$i \leq m$}{
        \STATE 4.2.1. Set $\epsilon_i = \epsilon_i + \delta_\epsilon$, and set all other privacy budgets $\epsilon_{j:j \neq i} = \epsilon_{j} - \frac{\delta_\epsilon}{m-1}$.\\ 
        \STATE 4.2.2. Calculate data utility metric $U_i$. \\
        \STATE 4.2.3. Set $\epsilon_i = \epsilon_i - \delta_\epsilon$, and set all other privacy budgets $\epsilon_{j:j \neq i} = \epsilon_{j} + \frac{\delta_\epsilon}{m-1}$. Set i = i + 1.\\
    }
    \ENDWHILE
    \IF{$U_{i:U_i = \max{U_i}} \geq U$}{
        \STATE 4.3. Set $\epsilon_{i:U_i= \max{U_i}} = \epsilon_i + \delta_\epsilon$. Set all other privacy budgets $\epsilon_{j:j \neq i} = \epsilon_{j} - \frac{\delta_\epsilon}{m-1}$.\\
    }
    \ENDIF
    \IF{computation time runs out}{
        \STATE 4.4. \textbf{break}.
    }
    \ENDIF
 }
\ENDWHILE
\caption{Bidirectional step-wise privacy budget distribution for transparent data stream analysis}
\end{algorithmic}
\end{algorithm}

\section{PATTERN-LEVEL PPMS FOR OPAQUE DATA STREAM ANALYSIS}\label{ppm for o}
For opaque data stream analysis, the PPM component is not informed about queries to target patterns. Therefore, additional measures are required to compensate for the missing information. In this section, we adapt the proposed PPMs for transparent data stream analysis so that they can be constructed with limited information about target patterns. The adaptation fulfills the requirements presented in Section \ref{system model}, i.e., the adapted PPMs do not reveal the queries that retrieve target patterns.

\subsection{THE UNIFORM APPROACH}
Although the structures of target patterns are unknown to the PPM component, the \textit{uniform} approach is not severely affected, since it requires only the observation span of target patterns, as defined in Section \ref{dp}.\ref{e and p}. Once the actual observation span is longer than the estimated value, redundant privacy protections are applied, leading to unnecessarily lower data utility. In practice, we may assume that the actual observation span is fixed as $l$, while the observation span provided by consumers is $\hat{l}$. When $\hat{l} < l$, we would apply privacy protection more frequently than we should, leading to redundant privacy protection and overall fewer privacy budgets. For a given theoretical privacy budget $\hat{\epsilon}$, the actual privacy budget applied for privacy protection is $\epsilon = \frac{\hat{\epsilon}\hat{l}}{l}.$ Assuming that data consumers follow the \textit{honest-but-curious} model and provide an accurate $\hat{l}$, we can then assign the privacy budgets uniformly. Since only an approximate observation span of a target pattern is provided to the PPM component, it is impossible to infer the concrete structure of the target pattern. Therefore, the target pattern and its queries are protected for the adapted \textit{uniform} approach.

\subsection{THE BIDIRECTIONAL STEP-WISE APPROACH} \label{opaque_bidirectional}
In addition to the observation spans, the \textit{step-wise} approach only requires data utilities from data consumers, since the $F_1$-scores can only be provided by consumers once the target patterns are unknown. An intuitive approach is to provide both the original data and protected data to data consumers to calculate the decrease in data utility. If the collected historical data are provided by data consumers, there are no privacy question marks. However, if the historical data are not owned by data consumers, they can still be private against the consumers. An obfuscation procedure is then employed for privacy protection. When processing historical data, real-time detection of target patterns is not required. Therefore, for each historical dataset, we generate $n$ obfuscation datasets with distinct yet sufficient noise attached. All obfuscation datasets are delivered to data consumers with the original dataset, expecting data utilities in return. In such cases, the privacy protection of the historical data can reach any predefined intensity, since as $n$ tends to infinity, the privacy protection tends to be infinity strong. We conclude the \textit{step-wise} approach for opaque data stream analysis in Algorithm 2. Note that the approach will not reveal target patterns, as neither of its required information can be a threat.

\begin{algorithm}
\begin{algorithmic}
 \STATE 1. For $m$ events that are relevant to private patterns, and a given total privacy budget $\epsilon = \frac{\hat{\epsilon}\hat{l}}{l}$, distribute evenly the budget $\epsilon_i = \frac{\epsilon}{m}$.\\
 \STATE 2. Select the size of each step based on field experience. A suggestion is $\delta_\epsilon = \frac{\epsilon}{100m}$.\\
 \STATE 3. Calculate data utility $U$; Set $U_1 = U_2 = ... = U_m = U$.\\
 \WHILE{$\epsilon_1, \epsilon_2,...,\epsilon_m  \in [0, \epsilon]$ and $U_{i:U_i = \max{U_i}} \geq U$}{
    \STATE 4.1. Set $i = 1$. Set $U = U_{i:U_i = \max{U_i}}$.\\
    \WHILE{$i \leq m$}{
        \STATE 4.2.1. Set $\epsilon_i = \epsilon_i + \delta_\epsilon$, and set all other privacy budgets $\epsilon_{j:j \neq i} = \epsilon_{j} - \frac{\delta_\epsilon}{m-1}$.\\ 
        \STATE 4.2.2. Deliver datasets with $n$ obfuscation datasets to data consumers to calculate data utility $U_i$.\\
        \STATE 4.2.3. Set $\epsilon_i = \epsilon_i - \delta_\epsilon$, and set all other privacy budgets $\epsilon_{j:j \neq i} = \epsilon_{j} + \frac{\delta_\epsilon}{m-1}$. Set i = i + 1.\\
    }
    \ENDWHILE
    \IF{$U_{i:U_i = \max{U_i}} \geq U$}{
        \STATE 4.3. Set $\epsilon_{i:U_i= \max{U_i}} = \epsilon_i + \delta_\epsilon$, and set all other privacy budgets $\epsilon_{j:j \neq i} = \epsilon_{j} - \frac{\delta_\epsilon}{m-1}$.\\
    }
    \ENDIF
    \IF{computation time runs out}{
        \STATE 4.4. \textbf{break}.
    }
    \ENDIF
 }
 \ENDWHILE
 \caption{Bidirectional step-wise privacy budget distribution for opaque data stream analysis}
\end{algorithmic}
\end{algorithm}

\section{EVALUATION}\label{eva}
To evaluate the proposed PPMs, we perform three studies. Study (1): A comparative study to compare our PPMs with the most relevant related works, i.e., budget distribution (BD) \cite{kellaris2014differentially} and Landmark privacy \cite{katsomallos2022landmark}, as well as two recent proposals based on statistical definitions of patterns, i.e., PrivShape in classification task mode\footnote{Multidimensional data are flattened for PrivShape.} \cite{mao2024privshape} and RetraSyn with population division (RetraSyn$^p$) \cite{hu2024real}. We use two real-world datasets (\textbf{Taxi} \cite{Taxi, yuan2010t, yuan2011driving} and \textbf{Ads} \cite{Ads, toabao}) and a synthetic dataset\footnote{As conducted by most related works, these datasets are streamed to the evaluated approaches to simulate the behavior of data streams.}. The \textbf{Synthetic} dataset is generated by a generator proposed in our previous work \cite{gu2023differential}. The introduced pattern-level PPMs \cite{palanisamy2020towards, delouee2023app} are not evaluated because they only support the sequence operator of CEP systems. Study (2): A synthetic study to analyze how our PPMs impact the data utility in a wide range of simulated practical scenarios and workloads with a sequence of synthetic datasets. Study (3): A study to evaluate the computational complexity of the proposed approaches.
\begin{figure*}
\centering
    \subfloat[Taxi with more target patterns\label{exp1a}]{\includegraphics[width=0.33\textwidth]{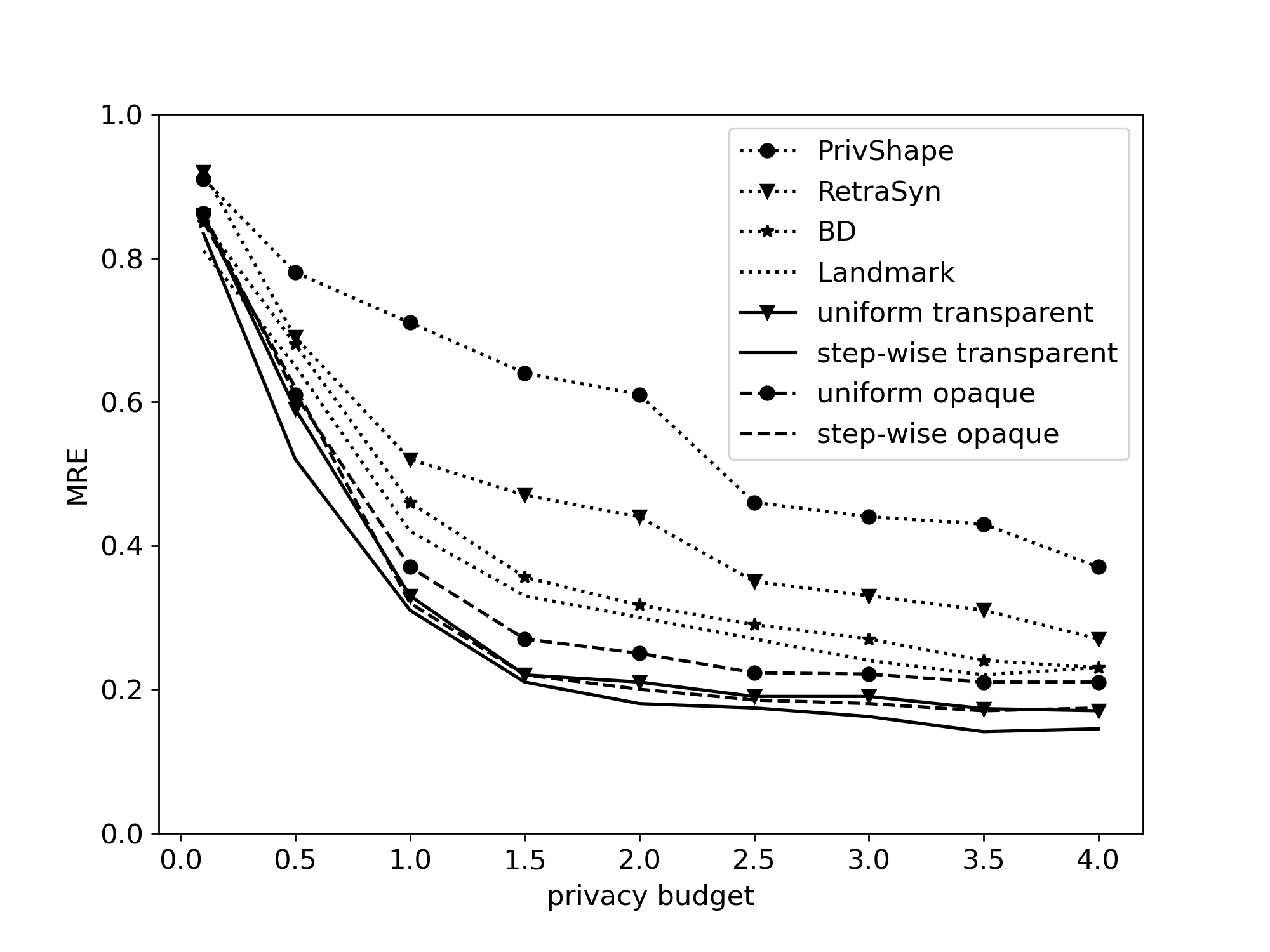}}%
    \subfloat[Advertisement\label{exp1b}]{\includegraphics[width=0.33\textwidth]{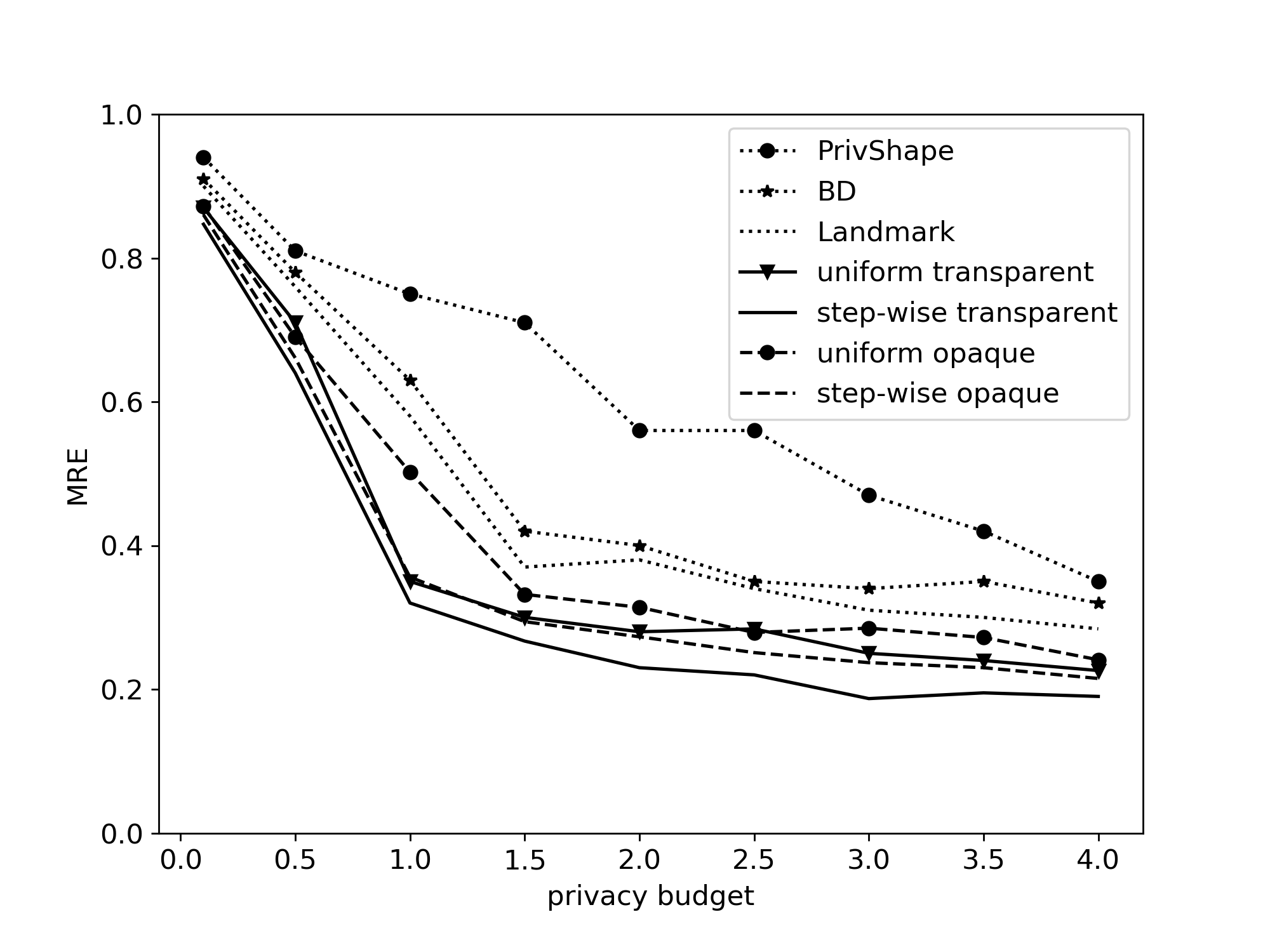}}%
    \subfloat[Synthetic datasets\label{exp1c}]{\includegraphics[width=0.33\textwidth]{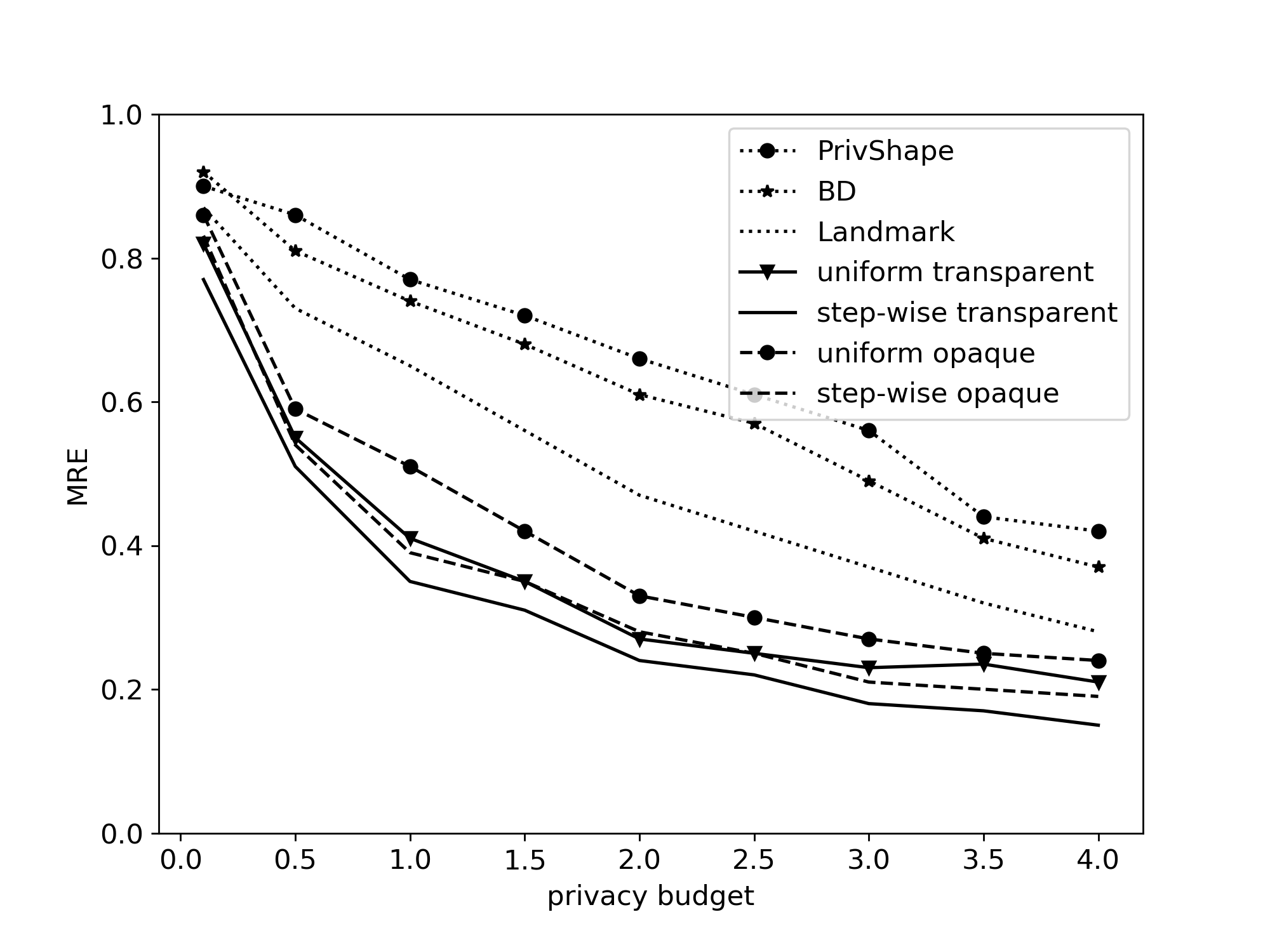}}
\caption{MRE with respect to privacy budget $\epsilon$. Higher $\epsilon$ corresponds to weaker privacy protection.}
\label{exp1}
\end{figure*}
\subsection{EXPERIMENTS}
We conduct experiments in the comparative and synthetic studies with Python on Windows 11 with an Intel 5.0GHz CPU and 32GB RAM. The study of computational complexity is conducted on a Raspberry Pi 4. The Taxi dataset contains GPS records of 10,357 taxis within Beijing (GPS locations were collected every 623 meters). Taxi has a relatively low complexity that enables an evaluation with varying ratios of private and target patterns. Ads consists of 26 million records of users of an online store, including the clicks of users on advertisements and their profiles, e.g., their IDs and genders. It contains more than 20 dimensions of data, which enables an investigation for relatively more complex use cases. Both Taxi and Ads are of sufficient quality and are used by multiple related works, including Landmark and RetraSyn. Synthetic datasets are generated by the generator \cite{gu2023differential} for more flexible analysis.

Although the datasets comprise real-world data, we need to complement them with semantically meaningful patterns to evaluate pattern-level PPMs. These patterns are related in the Taxi dataset to areas in Beijing that are determined as private areas, e.g., homes of passengers, and target areas. The Taxi service monitors any taxi that enters target areas while attempting to hide the proximity to private areas. The Beijing area is partitioned into 10,000 squared blocks of equal sizes with initially 40\% as private areas and 50\% of as target areas. Twenty-five percent of the private areas are then selected as additional target areas to ensure a sufficient overlap of private and target patterns. We conduct another experiment by exchanging target and private areas to evaluate a scenario with more private patterns than target patterns. For the Ads dataset, we construct more complex patterns to enhance the generalization of our evaluation, including four types of private patterns that combine users IDs with (1) the values of goods of interests, (2) genders, (3) resident cities, or (4) clicked advertisements, and three types of target patterns that for each user combine (1) its resident city with the goods of interests, (2) gender with clicked advertisements, and (3) age with the viewed goods. For the Synthetic dataset, patterns are constructed by the generator~\cite{gu2023differential}. By default, each pattern is constructed with six events, based on related studies and field experience, e.g., \cite{palanisamy2018preserving}. Each target pattern shares 25\% events with at least one private pattern. For the \textit{step-wise} approaches, 25\% data are randomly selected as historical data. For opaque data stream analysis, we simulate a scenario in which target patterns are confidential. 

The proposed PPMs are evaluated based on varying privacy budgets $\epsilon$ ranging from $0.1$ to $4$ for transparent and opaque data stream analysis. The performance of PPMs are evaluated by the decrease in data utility caused by employing a PPM. A smaller decrease corresponds to a better performance. The Mean Relative Error (MRE) is applied to measure the decrease:
\begin{equation}
    \mathit{MRE}_U = \frac{U_{\mathit{ord}} - U_{\mathit{PPM}}}{U_{\mathit{ord}}},
\end{equation}
where $U_{\mathit{PPM}}$ and $U_{\mathit{ord}}$ are the data utilities with and without applying PPMs, measured by $F_1$-scores. The investigated related works in the comparative study, i.e., BD, Landmark, PrivShape, and RetraSyn, are based on different DP guarantees and different definitions of privacy budgets. For evaluation, their privacy budgets are converted to the scale of pattern-level DP. BD and PrivShape are based on w-event DP and are considered as a scenario where target patterns consist of all events contained in a sliding window, and private patterns are individual events. Landmark privacy has the same private patterns, while the target patterns are its Landmark events~\cite{katsomallos2022landmark}. RetraSyn regards the transition of data producers as private patterns \cite{hu2024real} and has the same target patterns as our PPMs.

\subsection{COMPARATIVE STUDY RESULTS}
The comparative study with BD, Landmark, PrivShape, and RetraSyn consists of four experiments with (1) the Taxi dataset with more private patterns than target patterns, (2) the Taxi dataset with more target patterns, (3) the Ads dataset, and (4) the Synthetic dataset. RetraSyn is not evaluated in Experiments 3 and 4, because it is designed for trajectory data. Figure~\ref{exp1} illustrates the results of Experiments 1, 3, and 4. The results of Experiment 2 are not illustrated, as it produces a significantly similar plot to Experiment 1. The overall evaluation agrees that our pattern-level PPMs significantly outperform other state-of-the-art solutions. 

We notice that the performance of PrivShape and RetraSyn is distinct from the other approaches. Therefore, we first collectively analyze their results and subsequently discuss the performance of BD and Landmark. Compared to their related works, PrivShape and RetraSyn are notably affected by the settings of CEP systems, as illustrated in Figure~\ref{exp1}. PrivShape extracts patterns with classification methods instead of employing a CEP system. Theoretically, CEP systems can extract all patterns with $100\%$ accuracy, while the classification methods used by PrivShape have significantly lower accuracy \cite{mao2024privshape}. This leads to an unfair comparison in terms of data utility and a relatively less stable trend in MRE. Furthermore, PrivShape aims to extract statistically significant patterns, while the patterns for CEP systems are usually of little statistical significance, which reduces the effectiveness of PrivShape. The relatively low performance of RetraSyn is due to a similar cause. RetraSyn uses a Markov chain-based probabilistic model to generate synthetic trajectories \cite{hu2024real}. However, the mobility transitions based on the Taxi dataset and a CEP system are randomized, which neutralizes the usefulness of the Markov-based model. In addition, the regions labeled as private or target patterns are approximately only $70\%$ of all regions recorded by the Taxi dataset, while RetraSyn regards all regions as target patterns. This increases the redundancy in privacy protection for RetraSyn and reduces data utility.

The results of Experiment 1 (Figure~\ref{exp1a}) illustrate that the transparent and opaque \textit{step-wise} approaches outperform BD and Landmark. The transparent \textit{step-wise} approach surpasses the performance of BD and Landmark by approximately 30\% and that of the \textit{uniform} approach by 10\%, when privacy budgets are higher than 0.5. The approaches for transparent data stream analysis perform better than those for opaque data stream analysis because knowing target patterns allows PPMs to reduce the noise added to events belonging to these patterns, and thus reduces the loss of data utility. The results of Experiment 3 (Figure~\ref{exp1b}) illustrate more significantly the advantages of pattern-level PPMs. BD and Landmark are penalized more for more complex patterns, as the detection of target patterns is more affected after applying these PPMs. Opaque data stream analysis reduces the performance of our PPMs in this experiment more than in Experiment 1, since the loss of information of target patterns in such cases has more negative impacts. However, for the \textit{step-wise} approach, the information on target patterns can be partially retained by training on historical data, leading to a smaller decrease in data utility compared to the \textit{uniform} approach. Experiment 4 evaluates the PPMs for a more complex use case. Both private and target patterns for Experiment 4 consist of six events, which are significantly more than those for other experiments. The private and target patterns for this experiment are more likely to overlap each other than in other experiments. This experiment shows for all PPMs a similar trend in Figure~\ref{exp1c} as the other experiments, i.e., a higher privacy budget leads to a lower MRE. Furthermore, the graphs imply that the performance advantage of our PPMs over BD and Landmark increases with more complex patterns, as it leads to an approximately $60\%$ lower MRE in Experiment 4. Given the results of this study, we conclude that (1) our proposed approaches outperform the related works in all cases and (2) the performance advantages of our approaches increase with the complexity of the scenarios. Non-pattern-level PPMs are not designed to provide dedicated solutions for protecting private patterns, and their privacy budgets are therefore largely squandered. When privacy can be clarified and protected at the pattern level, the application of our proposed PPMs leads to significant improvements.

\begin{figure*}
\centering
    \subfloat[MRE w.r.t. occurrence of patterns\label{exp2a}]{\includegraphics[width=0.30\textwidth]{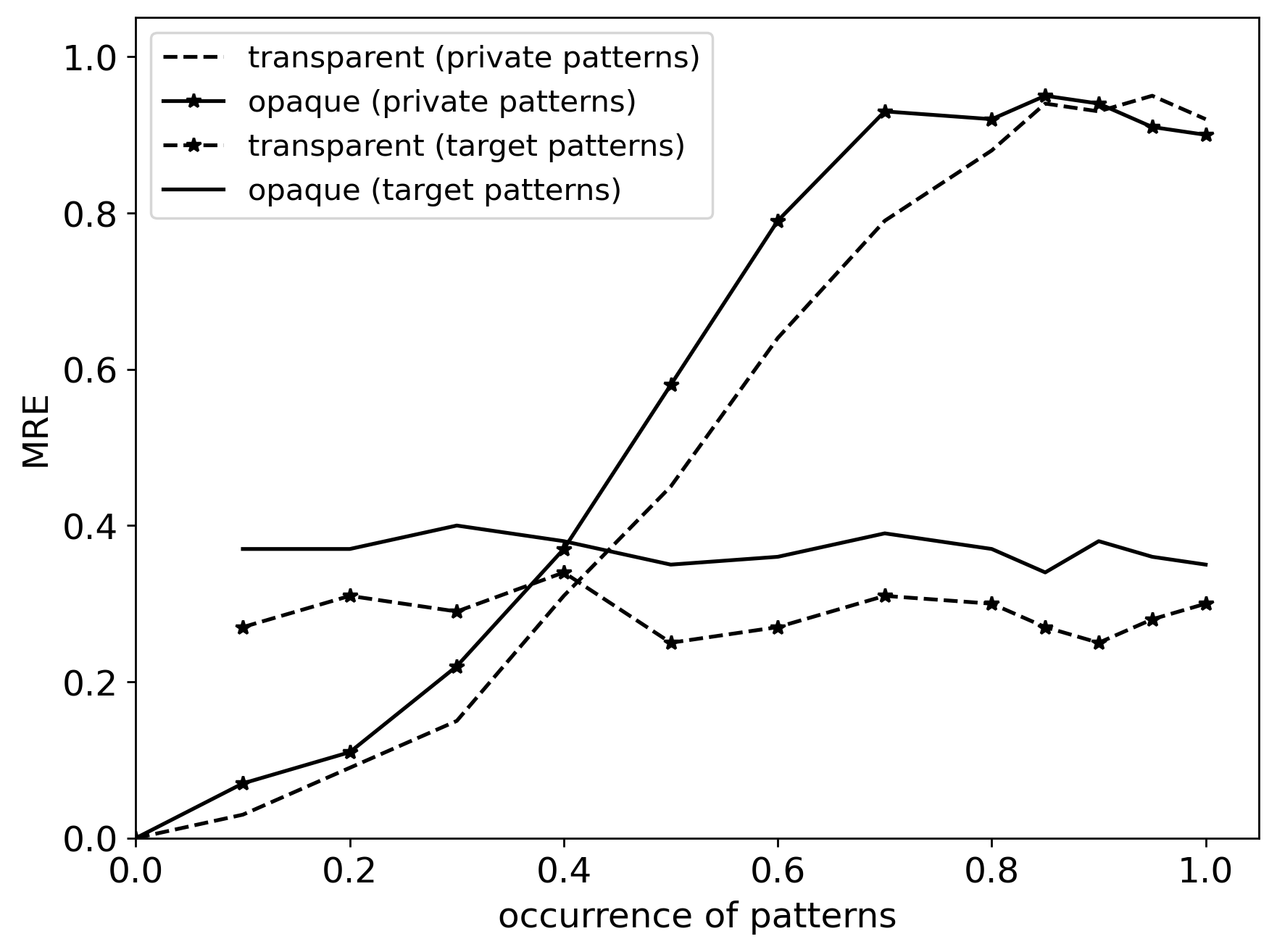}}%
    \subfloat[MRE w.r.t. complexity\label{exp2b}]{\includegraphics[width=0.30\textwidth]{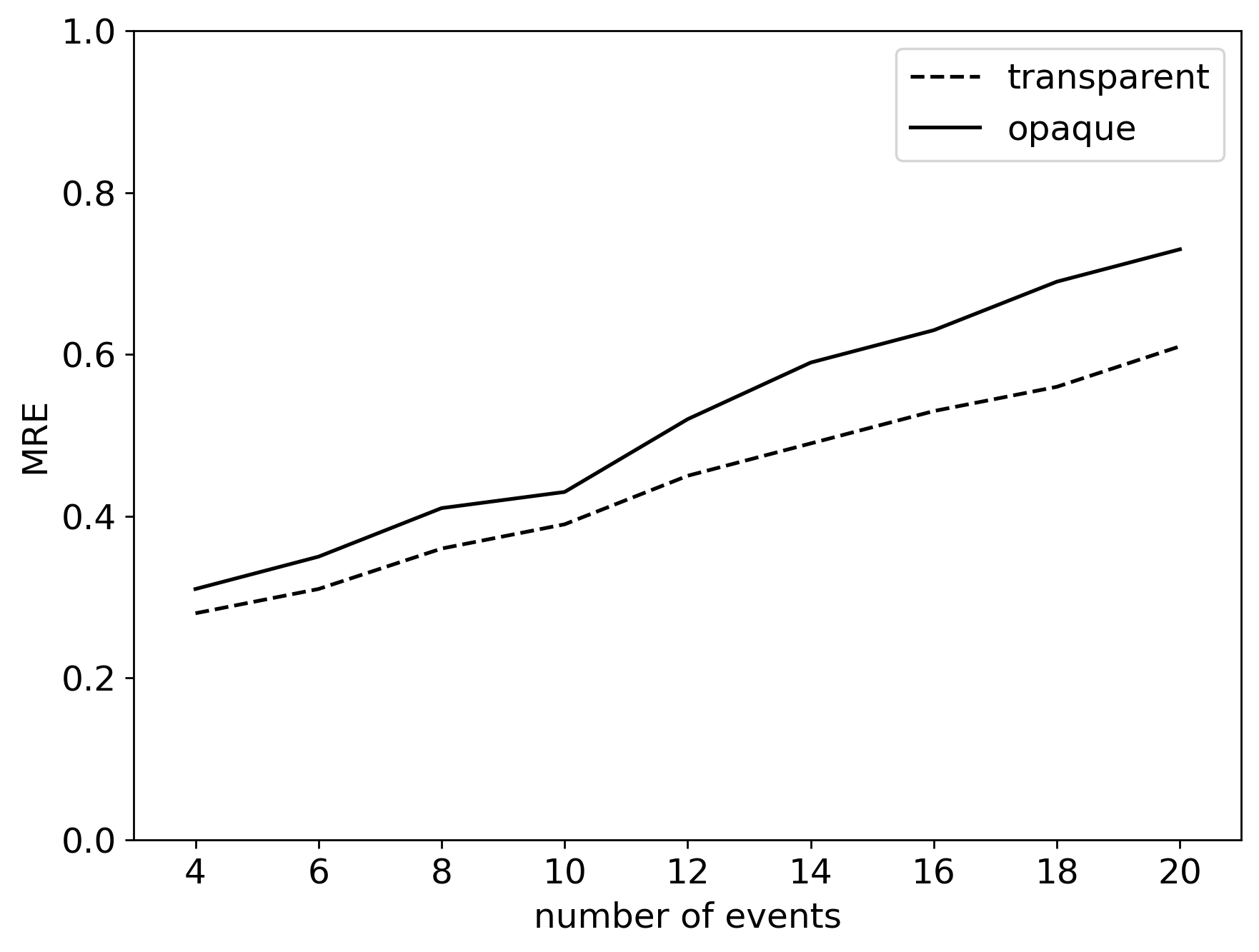}}%
    \subfloat[MRE w.r.t. overlapping rates\label{exp2c}]{\includegraphics[width=0.30\textwidth]{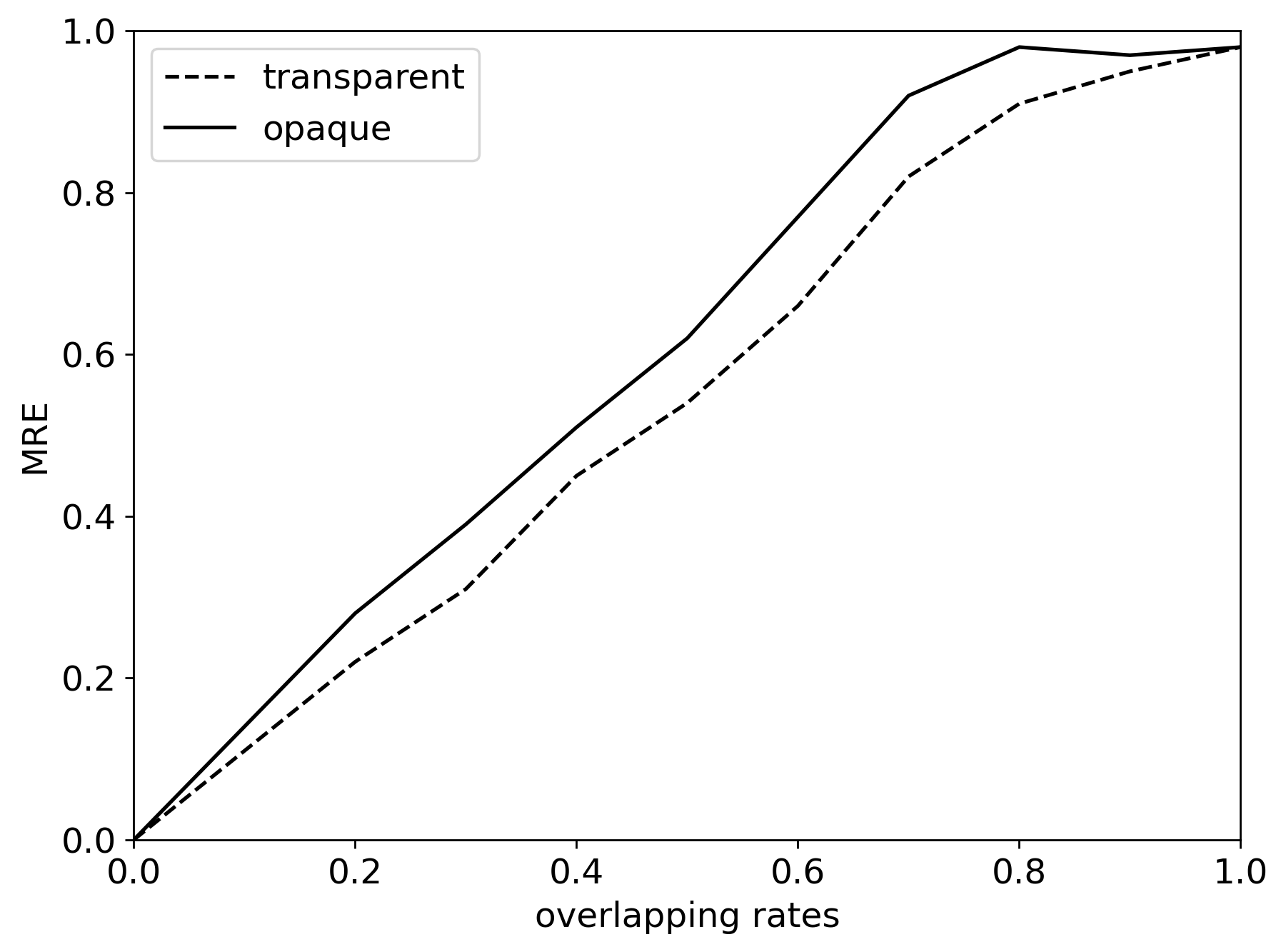}}
\caption{MREs in diverse scenarios. The privacy budget $\epsilon$ is fixed as $1$.}
\label{exp2}
\end{figure*}

\subsection{SYNTHETIC STUDY RESULTS}
The comparative study shows that our PPMs outperform the selected works for the scenarios that have been widely used for evaluation in related studies. Since many use cases might still diverge from them, we further investigate a wider range of workloads based on a sequence of synthetic datasets (each comprising $150,000$ events)~\cite{gu2023differential}. The study is conducted in three dimensions related to private and target patterns, i.e., the occurrence rates, the complexity, and the overlapping rates. The occurrence rate is the percentage of data belonging to a given pattern type w.r.t. the total amount of data. The complexity is the number of events contained in each pattern. The overlapping rates are the percentage of shared events between a target pattern and at least one private pattern. While one of the three characteristics changes, the others are fixed to default values. The default occurrence rates for private patterns are 40\%, while 50\% for target patterns, because lower occurrence rates lead to unnecessarily more experiments to collect sufficient results. The occurrence rates of private patterns are lower than those of target patterns based on field experience. The complexity is by default six events to ensure that private and target patterns share sufficient numbers of common events. In other words, the private and target patterns cannot be independent from each other. The evaluations would otherwise become meaningless. The default overlapping rates are 25\%. 

Figure~\ref{exp2a} illustrates that the MRE increases as the occurrence of privacy patterns increases, and when the occurrence of target patterns increases, the MRE is approximately constant. Higher occurrence of private patterns leads to overlapping privacy protections and redundant noises added to the same events, leading to larger MREs that eventually reach the maximum 1. However, when the occurrence of target patterns increases, the privacy protections are not affected, and hence the decrease in data utility is insignificant. The increasing complexity leads to an increasing MRE as illustrated in Figure~\ref{exp2b}. For more complex patterns, it is more challenging to maintain the completeness of target patterns from the noised data. Figure~\ref{exp2c} indicates that when the overlapping rate between private and target patterns grows, the data utility is degraded. Higher overlapping rates result in higher difficulty in maintaining the data utility of target patterns, as they are more impaired by the privacy protection employed on the shared events between private and target patterns. 

\subsection{COMPUTATIONAL COMPLEXITY}
We investigate the computational complexity of the \textit{step-wise} approach for opaque data stream analysis, as it puts the heaviest workload on data producers and consumers devices. 25\% of each dataset is randomly selected as the historical dataset required by \textit{step-wise}. We conduct the evaluation by measuring (1) the delay introduced on data producers devices (2) and the computational time on data consumers devices w.r.t. the number of data producers. 

We assume that the computing power of data producers and consumers devices, e.g., smartphones and servers, exceeds that of a Raspberry Pi. Therefore, a satisfactory computational complexity on the Raspberry Pi must be practical for data producers and consumers. To measure latency, we execute the \textit{step-wise} approach with a streamed Ads dataset in 20 events per second and measure the processing time for each event. The \textit{step-wise} approach introduces for each event an average 19ms delay (standard deviation 3.7ms, min 11ms, and max 33ms). Since it is not a performance-optimized Python implementation, we conclude that data producers devices can execute the PPM without introducing latency issues for typical applications. To measure the computational time, we evaluate the worst case in which one data consumer serves multiple distinct data producers. In detail, the computation on data consumers devices is mainly introduced in the setup phase when calculating data utilities for the \textit{step-wise} approach. In the worst case, each data producer requires a new data utility calculation based on its unique data stream schema and target patterns, while in practice, multiple data producers usually have the same data schema and target patterns so that the calculated data utilities can be cached, which significantly reduces the computational time. The evaluation shows that the computational time increases by 0.664 seconds per new producer, following a linear relationship. The amount of computational time and its linearly increasing behavior indicate a satisfactory computational complexity of our PPM in the worst case.

\section{CONCLUSION}\label{con}
State-of-the-art PPMs in CEP systems often result in redundant privacy safeguards, compromising the data utility. To reduce this redundancy, we introduced in this article novel pattern-level solutions that lead to superior performance w.r.t. data utility under equally strong privacy protection as existing solutions. The core idea is to (1) guarantee DP for private patterns and (2) leverage knowledge about private and target patterns to fine-tune the distribution of privacy budgets to maximize data utility. In transparent data stream analysis, private and target patterns are available to craft a PPM, while in opaque data stream analysis, data consumers do not reveal their target patterns. This challenge is addressed by giving data consumers the incentive of higher data utility when contributing to the implementation of a tailored PPM. All the assistance that data consumers need to provide is observation spans of the target patterns and evaluations of data utility. With this assistance, we have designed two PPMs for opaque data stream analysis that achieve pattern-level DP and are superior to existing solutions w.r.t. data utility. The core contributions of this work are: (1) A new solid theoretical foundation for privacy protections at the pattern level. (2) Four novel PPMs for transparent and opaque data stream analysis that provide pattern-level DP and superior data utility compared to existing works. (3) Experiments for a comparative study with three datasets demonstrating the improvements of the pattern-level PPMs compared to existing works. (4) A simulation-based study to systematically evaluate the performance and behaviors of our PPMs for a wide range of potential scenarios.
 
This work is based on the common assumption for most related studies that all private information is well-defined by data producers. However, in future work, we see the potential to modify this assumption so that data producers with little privacy expertise can be automatically assisted to identify and define their privacy requirements.

\bibliographystyle{IEEEtran}
\bibliography{sample-base}

\vfill\pagebreak

\end{document}